\documentclass[onecolumn,journal]{IEEEtran}

\ifCLASSINFOpdf
\else
\fi

\usepackage{amsmath,amssymb}
\usepackage{algpseudocode}
\usepackage{cite}
\usepackage[caption=false,font=footnotesize]{subfig}
\usepackage[pdftex]{graphicx}
\usepackage{epstopdf} 
\DeclareGraphicsExtensions{.jpg,pdf,.png}
\usepackage{amsthm}
\usepackage{algorithm}
\usepackage{multirow}
\newtheorem{theorem}{Theorem}

\theoremstyle{definition}
\newtheorem{definition}{Definition}

\theoremstyle{remark}
\newtheorem{remark}{Remark}

\date{}
\begin{document}
\title{Achievable Secrecy Rate for the Relay Eavesdropper Channel with Non-Causal State Available at Transmitter and Relay}
\author{Nematollah Zarmehi\thanks{N. Zarmehi is with the Advanced Communication Research Institute (ACRI), Electrical Engineering Department, Sharif University of Technology, Tehran, Iran. (e-mail: zarmehi\_n@ee.sharif.edu)}}
\markboth{}{N. Zarmehi: Achievable Secrecy Rate for the Relay Eavesdropper Channel with Non-Causal State Available at Transmitter and Relay}
\maketitle

\begin{abstract}
In this paper, we consider a more general four-terminal memoryless relay-eavesdropper channel with state information (REC-SI) and derive an achievable perfect secrecy rate for it. We suppose that the state information is non-causally available at the transmitter and relay only. The transmitter wishes to establish a secure communication with the legitimate receiver by the help of a relay where a confidential message will be kept secret from a passive eavesdropper. We consider active cooperation between the relay and transmitter. The relay helps the transmitter by relaying the message using decode-and-forward (DF) scheme. The proposed model is a generalization of some existing models and the derived achievable perfect secrecy rate is compared to the special cases. The results are also validated numerically for the additive white Gaussian noise (AWGN) channel.
\end{abstract}


\begin{IEEEkeywords}
Active cooperation, information theoretic secrecy, relay eavesdropper channel, secrecy rate, state information.
\end{IEEEkeywords}

\section{Introduction}\label{sec:intro}
\IEEEPARstart{S}{hannon} developed the theory of secrecy systems in \cite{ref:shannonSecSys}, in which the transmitter attempts to send message $M$ to a legitimate receiver through a noiseless channel. This message is encrypted to a cipher text $E$ by a key ($K$) shared between the transmitter and receiver. In Shannon's model, an eavesdropper which knows the family of encryption and decryption functions has access to the ciphertext $E$ and tries to get some information about the message $M$. In Shannon model \cite{ref:shannonSecSys}, the perfect secrecy can be achieved if $p(M|E)=P(M)$. In particular, the system is considered perfectly secure if the key space is at least as large as the message space.

The basic information theoretic model of wiretap channel was introduced by Wyner \cite{ref:wtWyner}. In this model, the eavesdropper as a passive receiver gets a noisy version of the legitimate receiver's signal. Then, this degraded wiretap channel was extended to the broadcast channel with confidential messages by Csisz\'ar and K\"orner in \cite{ref:csiszarKorner}, where the transmitter attempts to send a common message to both legitimate receiver and eavesdropper and to send a secret message only to the legitimate receiver.

Considering the works on communications with state information \cite{ref:shannonSideInf,ref:gelfandPinsker}, Chen and Vink investigated the Gaussian wiretap channel with state information \cite{ref:mitrpantGaussWtSide}. This work was then extended to the wiretap channel with state information in \cite{ref:chen2008} and after that, several similar ideas were investigated \cite{ref:elgamalCausalWt,ref:khistiKeyAgr,ref:BocheWtStrngSecrecy,ref:villardSecTrns,ref:zibaeenejad}. 

During the past years, a number of international efforts have led to developing several initiatives on the concept of cooperative secrecy. These works deal with three main categories, oblivious cooperation for secrecy, active cooperation for secrecy, and untrusted helpers \cite{ref:ekremCoopSec}. In the first category, the cooperating party facilitates the transmission of a confidential message from the transmitter to the receiver without the need of any information about that message. Cooperative jamming \cite{ref:coopJamming}, artificial noise \cite{ref:artfNoise}, and noise forwarding \cite{ref:elgamalREC} are three common techniques proposed for this kind of cooperation. In the second category, the cooperating party is active and facilitates the transmission by relaying the message using some techniques like decode and forward (DF) and compress and forward \cite{ref:relayWt,ref:elgamalREC}. Finally, in the third category, the eavesdropper is assumed to be a cooperating partner and the interaction of cooperation and secrecy is investigated \cite{ref:xiangUnHelper,ref:elgamalCapClassRc}.

In this paper, we consider relay eavesdropper channel with an active cooperation. Moreover, we assume that the channel is state-dependent and a state information is non-causally available at the transmitter and relay. Exploiting the results of cooperation for secrecy in relay eavesdropper channel \cite{ref:elgamalREC} and state-dependent relay channel \cite{ref:zaidiRcSide}, we consider a general framework and propose achievable secrecy rate for the memoryless relay eavesdropper channel with state information (REC-SI). In this model, the relay uses DF scheme to facilitate the transmission of the confidential message from the transmitter to the receiver. We extend this work to the Gaussian channel and provide some numerical examples.

The organization of the paper is as follows. Section \ref{sec:notations-sysModel} introduces our notations and the discrete memoryless REC-SI. An achievable secrecy rate for the REC-SI is presented in Section \ref{sec:achrate}. In Section \ref{sec:numex}, we extend the results to the Gaussian REC-SI. We provide some numerical examples of Gaussian REC-SI in Section \ref{sec:numex}. Finally, Section \ref{sec:con} concludes the paper.

\section{Notations and System Model}\label{sec:notations-sysModel}
\subsection{Notations}\label{subsec:notations}
We use the following notations throughout the paper. Uppercase letters are used to denote random variables, (e.g. $X$), and lowercase letters their realizations, (e.g. $x$). Also, calligraphic letters are used to denote the alphabet set, (e.g. $\mathcal{X}$). Vectors are denoted as $\mathbf{X^n}=\bigl({X(1),X(2),X(3),...,X(n)}\bigr)$. We use the following shorthand for probability mass functions (p.m.f.): $p(x)\triangleq \mathbf{Pr(}X=x\mathbf{)}$, $p(x,y)\triangleq \mathbf{Pr(}X=x,Y=y\mathbf{)}$, and $p(x|y)\triangleq \mathbf{Pr(}X=x|Y=y\mathbf{)}$. Finally, $[x]^+$ denotes $\max\{0,x\}$.

For simplicity in notion, we need the following definitions.
\begin{definition}\label{def:somefuncs}
\begin{align*}
\begin{split}
C_1(x) \buildrel \Delta \over = & \frac{1}{2}\log (1+x),\\
C_2(a,b,c,d,e,f,g,h) \buildrel \Delta \over = & \frac{1}{2}\log
\left({\frac{\bigl({|ac-b|^2df+ed+c^2fe}\bigr)\bigl({|ag+h|^2d+|g|^2e+|g+h|^2f+1}\bigr)}{|g(b-1)+h(c-1)|^2fde+|ac-b|^2df+ed+c^2fe}}\right).
\end{split}
\end{align*}
\end{definition}

\subsection{System Model}\label{subsec:sys-model}
To specify the discrete memoryless REC-SI depicted in Fig. \ref{fig:sys-model}, we define five sets $\left({\mathcal{X}_1,\mathcal{X}_2,\mathcal{Y}_2,\mathcal{Y},\mathcal{Z}}\right)$. We also define a transition probability distribution $p(y_2,y,z|x_1,x_2,s)$ for all\\ $\left({x_1,x_2,y_2,y,z,s\in \mathcal{X}_1,\mathcal{X}_2,\mathcal{Y}_2,\mathcal{Y},\mathcal{Z},\mathcal{S}}\right)$. $\mathcal{X}_1$ and $\mathcal{X}_2$ are the transmitter and relay inputs, respectively while $\mathcal{Y}_2$, $\mathcal{Y}$, and $\mathcal{Z}$ are the outputs of the relay, legitimate receiver, and eavesdropper, respectively. We suppose that the random state information $S$ is non-casually known to the transmitter and relay only, and taking values from a finite set $\mathcal{S}$. In this model, the transmitter wants to transmit a message $m \in \left\{{1,2,...,2^{nR} }\right\}$ to the receiver with the help of relay using a $\left({2^{nR},n,P_e^n}\right)$ code in $n$ channel uses. The relay helps the transmitter by relaying the message using DF scheme.

\begin{figure}[t!]
	\centering
	\includegraphics[width=0.75\linewidth]{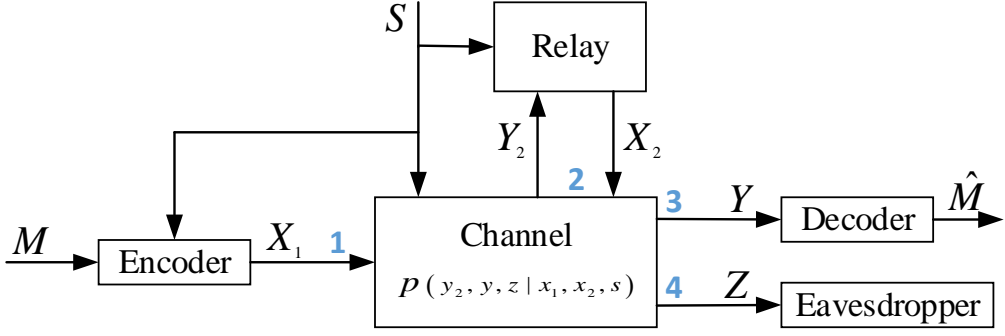}
	\caption{The four-terminal relay eavesdropper channel with state information.}
	\label{fig:sys-model}
\end{figure}

\begin{definition}
A $\left({2^{nR},n,P_e^n}\right)$ code for the discrete memoryless REC-SI consists of\\
\begin{enumerate}[I.]
\item A stochastic encoder function at the transmitter,
\begin{equation}
\Phi_1^n: \left\{{1,2,...,2^{nR}}\right\} \times \mathcal{S}^n \rightarrow \mathcal{X}_1^n.
\end{equation}
\item A sequence of stochastic encoder functions at the relay,
\begin{equation}
\Phi_2(i): \mathcal{Y}_2^{i-1}\times \mathcal{S}^n \rightarrow \mathcal{X}_2(i) \quad \text{for} \qquad i=1,2,...,n.
\end{equation}
\item A decoding function at the legitimate receiver,
\begin{equation}
\Psi^n: \mathcal{Y}^n \rightarrow \left\{{1,2,...,2^{nR}}\right\}.
\end{equation}
\item The average error probability of this code is 
\begin{equation}
P_e^n = \frac{1}{2^{nR}} \sum\limits_{m=1}^{2^{nR}}{\mathbf{Pr}\biggl\{{\Psi^n(Y^n) \neq m \left|{ \vphantom{\sum}}\right. m~\text{was sent}}\biggr\}}.
\end{equation}
\item The information leakage rate for the eavesdropper is measured by
\begin{equation}
R_L = \frac{1}{n}I(M;Z^n).
\end{equation}
\end{enumerate}
\end{definition}

\begin{definition}
The rate-leakage pair $(R,R_L^*)$ is achievable for the discrete memoryless REC-SI if for any $\epsilon >0$, there exist a sequence of coeds $\left({2^{nR},n,P_e^n}\right)$ such that for sufficiently large $n$, we have
\begin{equation}
P_e^n \leq \epsilon \quad \text{and} \quad R_L \leq R_L^* + \epsilon.
\end{equation}
The prefect secrecy rate $R_s$ is achievable if the rate-leakage pair $(R_s,R_s)$ is achievable.
\end{definition}

\section{Achievable Secrecy Rate for REC-SI}\label{sec:achrate}
In this section, we present an achievable perfect secrecy rate for the discrete memoryless REC-SI with DF strategy after the following definitions.
\begin{definition}\label{def1}
	Define $\mathcal{P}$ as the set of all joint distributions of the random variables $S$, $U_1$, $U_2$, $X_1$, $X_2$, $Y_2$, $Y$, and $Z$ which factor as
	\begin{equation}\label{eq:pmf}
		p(s,u_1,u_2,x_1,x_2,y_2,y,z)=p(s)p(u_1,u_2,x_1,x_2|s)p(y_2,y,z|x_1,x_2,s).
	\end{equation}
\end{definition}
\begin{definition}
	Given $p\in \mathcal{P}$, define the following rates:
	\begin{equation}
		\tilde{R} \buildrel \Delta \over =  \biggl[{I(U_1;Y_2|S,U_2)-I(U_1,U_2;Z)}\biggr]^+
	\end{equation}
	and
	\begin{equation}
		\hat{R} \buildrel \Delta \over = \biggl[{I(U_1,U_2;Y)-I(U_1,U_2;S)-I(U_1,U_2;Z)}\biggr]^+.
	\end{equation}
\end{definition}

An achievable perfect secrecy rate for discrete memoryless REC-SI is given by the following theorem.
\begin{theorem}\label{th1}
	For a discrete memoryless REC-SI with DF strategy and sate information $S^n$ non-causally available at the transmitter and relay, the following perfect secrecy rate is achievable:
	\begin{equation}\label{eq:achrate}
		R_s = \mathop {\sup }\limits_{p \in \mathcal{P}} 
		{\min\left\{{\tilde{R},\hat{R}}\right\} }.
	\end{equation}
\end{theorem}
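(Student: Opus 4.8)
The plan is a random-coding achievability argument that layers three ingredients on top of one another: block-Markov decode-and-forward for the relay, Gelfand--Pinsker binning to pre-cancel the non-causal state at \emph{both} the transmitter and the relay, and an extra wiretap-style randomization layer whose job is to force the leakage $\tfrac1n I(M;Z^n)$ to zero. Accordingly, I would associate with the confidential message of each block a message index plus two auxiliary indices: a ``Gelfand--Pinsker'' index that merely supplies enough codeword choices for the covering (encoding) step to succeed, and a ``randomization'' index of rate about $I(U_1,U_2;Z)$ that saturates the eavesdropper.

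First, codebook construction over $B$ blocks. For each block, generate $\approx 2^{n(R_2 + \rho_2)}$ cloud-center codewords $U_2^n$ and, superimposed on each, $\approx 2^{n(R + \rho_1 + \sigma)}$ satellite codewords $U_1^n$, then draw $X_1^n, X_2^n$ through $p(x_1,x_2\mid u_1,u_2,s)$. Encoding in block $b$: knowing the previously decoded message and the full state $S^n$, the relay searches its assigned bin for a $U_2^n$ jointly typical with $S^n$ (covering succeeds if $\rho_2 > I(U_2;S)$); the transmitter then searches, among the satellites indexed by the current message and the randomization index, for a $U_1^n$ jointly typical with $(U_2^n, S^n)$ (covering succeeds if $\rho_1 > I(U_1;S\mid U_2)$). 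Decoding: at the end of block $b$ the relay knows $S^n$ and $U_2^n$, so it can decode $U_1^n$ --- hence $m_b$ --- provided the residual rate carried by $U_1^n$ is below $I(U_1;Y_2\mid S,U_2)$; after removing $\rho_1$ and $\sigma$, the net message constraint is $R \le I(U_1;Y_2\mid S,U_2) - I(U_1,U_2;Z) = \tilde R$. At the legitimate receiver I would use backward decoding: starting from the last block, the receiver --- which does \emph{not} know $S^n$ --- jointly decodes $(U_1^n,U_2^n)$ and hence $m_b$; the usual Gelfand--Pinsker-with-superposition bookkeeping yields $R \le I(U_1,U_2;Y) - I(U_1,U_2;S) - I(U_1,U_2;Z) = \hat R$. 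Taking the minimum of the two reliability constraints gives $\min\{\tilde R,\hat R\}$, and optimizing the codebook p.m.f.\ over $\mathcal{P}$ gives the supremum in \eqref{eq:achrate}.

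The last and hardest step is the equivocation analysis, i.e.\ showing $\tfrac1n I(M;Z^n)\le\epsilon$. Since $Z^n$ spans all $B$ blocks and the code has block-Markov dependence, I would first reduce the problem to a single block by a chain-rule/union bound, $I(M;Z^{nB}) \le \sum_{b} I(M_b;Z_b^n\mid \text{side information}) + o(n)$, and then run the standard wiretap argument within one block: conditioned on the message, the randomization index is essentially uniform of rate $\gtrsim I(U_1,U_2;Z)$, so a Wyner-type (covering) argument on the eavesdropper's side shows it cannot resolve that index, which in turn washes out $M_b$. The delicate point is that the Gelfand--Pinsker indices $\rho_1,\rho_2$ and the randomization rate $\sigma$ must be chosen large enough for the covering steps \emph{and} the eavesdropper-confusion step to hold simultaneously, and one must verify that the state $S^n$ --- which the eavesdropper does \emph{not} observe --- does not leak through $X_1^n, X_2^n$; it is precisely this that forces the subtracted term to be the full $I(U_1,U_2;Z)$. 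I expect the careful accounting of these rate splits, together with the block-Markov union bound in the leakage analysis, to be the main obstacle; the reliability parts follow routinely from the covering and packing lemmas once the codebook is in place.
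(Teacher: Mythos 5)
Your proposal is correct and follows essentially the same architecture as the paper's proof: block-Markov decode-and-forward, Gelfand--Pinsker covering at both the transmitter (rate $\rho_1>I(U_1;S\mid U_2)$, the paper's $R_3$) and the relay (rate $\rho_2>I(U_2;S)$, the paper's $R_2$), plus a wiretap randomization index of rate just above $I(U_1,U_2;Z)$ (the paper's $R_1$), with the two reliability constraints combining to give exactly $\min\{\tilde R,\hat R\}$. The only substantive deviations are localized and harmless. At the legitimate receiver you use backward decoding, whereas the paper uses sliding-window decoding over blocks $b$ and $b+1$; both are standard for DF and both yield the same packing constraint $R+\sigma+\rho_1+\rho_2\le I(U_1,U_2;Y)$, so the achieved rate is unchanged (sliding-window avoids the decoding delay of backward decoding but requires the slightly more involved two-block joint-typicality event the paper spells out). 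For the leakage analysis, you propose a per-block chain-rule reduction followed by a single-block Wyner argument; the paper instead bounds $I(M;Z^n\mid\mathcal{C})$ in one global computation using the Markov chain $(\mathcal{C},M,l)\rightarrow(U_1^n,U_2^n)\rightarrow Z^n$, a single-letterization bound $I(U_1^n,U_2^n;Z^n)\le nI(U_1,U_2;Z)+n\epsilon'$, and Lemma 22.1 of El Gamal--Kim to control $H(l\mid M,Z^n,\mathcal{C})$; either route works, and your observation that the randomization rate must cover the full $I(U_1,U_2;Z)$ (because the eavesdropper's ignorance of $S^n$ is already absorbed into the choice of $U_1,U_2$) is exactly the condition the paper imposes via $R_1$.
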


\begin{proof}
	The proof of this theorem is based on random coding scheme which combines Csiszar {\it{et al.}} \cite{ref:csiszarKorner} and sliding-window decoding strategy \cite{ref:kingdiissertation,ref:carleial,ref:bookElgamal}. The proof is provided in Appendix \ref{app:proof1}.
\end{proof}

The presented achievable perfect secrecy rate for the discrete memoryless REC-SI is generalization of some previous works.

\begin{remark}
	If $Z=\varnothing$, our model reduces to the RC-SI and Theorem \ref{th1} reduces to the achievable rate presented in \cite[Theorem 2.1]{ref:zaidiRcSide}.
\end{remark}

\begin{remark}
	If $S=\varnothing$, our model reduces to the REC and Theorem \ref{th1} reduces to the perfect secrecy rate achieved in \cite[Theorem 2]{ref:elgamalREC}.
\end{remark}

\begin{remark}
	If we set $U_1=X_1$ and $U_2=X_2$ and disable the state information ($S=\varnothing$), our model is reduced to the classical RC and Theorem \ref{th1} is reduced to the rate of RC in \cite[Theorem 6]{ref:coverElgamalTh}.
\end{remark}

\section{The Gaussian REC-SI}\label{sec:gauss-ch}
In this section, we extend the results of Section \ref{sec:achrate} to the AWGN relay eavesdropper channel with additive state information. As previous section, we suppose that the state information is known to the transmitter and relay.

\subsection{Channel Model}\label{subsec:gaussmodel}
Assume that the received signals at each nodes of the model depicted in Fig. \ref{fig:sys-model} at times $i=1,2,\cdots,n$ are as follows:
\begin{align}
	\begin{split}
		Y_2(i) = & h_{sr}\bigl[{X_1(i) + S(i)}\bigr] + Z_1(i),\\
		Y(i) = & h_{sd}\bigl[{X_1(i)+S(i)}\bigr] + h_{rd}\bigl[{X_2(i)+S(i)}\bigr]+Z_2(i), \quad \mbox{and}\\  
		Z(i) = & h_{se}\bigl[{X_1(i)+S(i)}\bigr] + h_{re}\bigl[{X_2(i)+S(i)}\bigr]+Z_3(i)
	\end{split}
\end{align}
where $X_1(i)$ and $X_2(i)$ are the transmitted signals by the transmitter and relay, $S(i)$ is the channel state information and non-causally known to the transmitter and relay, $Y_2(i), Y(i)$ and $Z(i)$ are the outputs of the relay, legitimate receiver, and eavesdropper, respectively, $Z_1(i), Z_2(i)$ and $Z_3(i)$ are i.i.d. and independent white Gaussian noises with unit variance, and $h_{jk}$ is the channel coefficient between node $j$ and node $k$. We suppose that the channel state information at different times are i.i.d. and distributed as $\mathcal{N}(0,Q)$. We also let $X_2(i) \sim \mathcal{N}(0,P_2)$ and
\begin{equation}
	X_1(i) = cX_2(i)+\hat{X}(i)
\end{equation}
where $c$ is a constant and the novel information is modeled by $\hat{X}(i)$, i.i.d. $\mathcal{N}(0,P)$. The power constraint at the transmitter and relay is 
\begin{equation}
	|c|^2P_2+P\leq P_1.
\end{equation}

\subsection{Achievable Secrecy Rate for the Gaussian REC-SI}\label{subsec:gaussachrate}
The main result of Thoerem \ref{th1} for the Gaussian REC-SI is presented in the following theorem.
\begin{theorem}\label{th2}
	Using the auxiliary random variables $U_1=X_1+\alpha_1 S$ and $U_2=X_2+\alpha_2 S$, where $\alpha_1$ and $\alpha_2$ are real numbers and the state information $S$ is independent of $X_1$ and $X_2$, the following perfect secrecy rate is achievable for the Gaussian REC-SI:
	\begin{equation}\label{eq:gaussachrate}
		R_s^G = \mathop {\max }\limits_{c, \alpha_1, \alpha_2, P} \min\left\{{  \tilde{R}^G, \hat{R}^G }\right\},
	\end{equation}
	where
	\begin{equation}\label{eq:gaussachrate1}
		\tilde{R}^G = \biggl[{C_1\left({|h_{sr}|^2P}\right) - C_2 \left({c,\alpha_1,\alpha_2,P_2,P,Q,h_{se},h_{re}}\right)}\biggr]^+
	\end{equation}
	and
	\begin{equation}\label{eq:gaussachrate2}
			\hat{R}^G = \biggl[{C_2 \left({c,\alpha_1,\alpha_2,P_2,P,Q,h_{sd},h_{rd}}\right) - C_2 \left({c,\alpha_1,\alpha_2,P_2,P,Q,h_{se},h_{re}}\right) }\biggr. 
			 \biggl.{	- C_1 \left({\frac{|c\alpha_2-\alpha_1|^2P_2+\alpha_2^2P}{PP_2}Q}\right)}\biggr]^+.
	\end{equation}
\end{theorem}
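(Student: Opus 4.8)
The plan is to obtain Theorem \ref{th2} as a direct specialization of Theorem \ref{th1}, evaluating the single-letter quantities $\tilde{R}$ and $\hat{R}$ for the jointly Gaussian input law induced by $X_2 \sim \mathcal{N}(0,P_2)$, $X_1 = cX_2 + \hat{X}$ with $\hat{X} \sim \mathcal{N}(0,P)$ independent of $X_2$ and $S$, together with the dirty-paper-type auxiliaries $U_1 = X_1 + \alpha_1 S$, $U_2 = X_2 + \alpha_2 S$. First I would check that this choice belongs to the class $\mathcal{P}$ of Definition \ref{def1}: $(U_1,U_2)$ is a deterministic function of $(X_1,X_2,S)$ and, by hypothesis, $(X_1,X_2)$ is independent of $S$, so the joint law factors as $p(s)p(u_1,u_2,x_1,x_2|s)p(y_2,y,z|x_1,x_2,s)$. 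Since Theorem \ref{th1} is stated for discrete memoryless channels, I would also invoke the standard discretization/limiting argument (the usual extension of Gel'fand--Pinsker-type coding theorems to memoryless channels under an average power constraint) to carry achievability over to the Gaussian REC-SI; what then remains is purely the explicit computation of Gaussian mutual informations.

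For the forward term I would compute $I(U_1;Y_2|S,U_2)$: conditioning on $(S,U_2)$ is equivalent to conditioning on $(S,X_2)$ because $(U_1,U_2)\mapsto(X_1,X_2)$ is a bijection once $S$ is fixed, and under this conditioning $Y_2 = h_{sr}(cX_2+\hat{X}+S)+Z_1$ is an AWGN observation of $\hat{X}\sim\mathcal{N}(0,P)$ with unit-variance noise, giving $I(U_1;Y_2|S,U_2)=C_1(|h_{sr}|^2P)$. For the receiver and eavesdropper terms, $(U_1,U_2,Y)$ and $(U_1,U_2,Z)$ are jointly Gaussian, so $I(U_1,U_2;Z) = \tfrac12\log\frac{\det\Sigma_{U_1U_2}\,\mathrm{Var}(Z)}{\det\Sigma_{U_1U_2Z}}$ and similarly for $Y$. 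I would evaluate $\mathrm{Var}(Z) = |ch_{se}+h_{re}|^2P_2 + |h_{se}|^2P + |h_{se}+h_{re}|^2Q + 1$ directly, and compute the $2\times2$ and $3\times3$ covariance determinants by expressing $(U_1,U_2,Z)$ as a linear image of the independent Gaussians $(\hat{X},X_2,S,Z_3)$ and applying the Cauchy--Binet formula: the four nonzero $3\times3$ minors reproduce exactly the summands $|h_{se}(\alpha_1-1)+h_{re}(\alpha_2-1)|^2PP_2Q$, $PP_2$, $\alpha_2^2PQ$ and $|c\alpha_2-\alpha_1|^2P_2Q$ of $\det\Sigma_{U_1U_2Z}$, while the unimodular change of variables $(U_1,U_2)\mapsto(U_1-cU_2,U_2)$ gives $\det\Sigma_{U_1U_2}=PP_2+\alpha_2^2PQ+|c\alpha_2-\alpha_1|^2P_2Q$. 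Matching these against Definition \ref{def:somefuncs} yields $I(U_1,U_2;Z)=C_2(c,\alpha_1,\alpha_2,P_2,P,Q,h_{se},h_{re})$ and, by the same computation with $(h_{sd},h_{rd})$ in place of $(h_{se},h_{re})$, $I(U_1,U_2;Y)=C_2(c,\alpha_1,\alpha_2,P_2,P,Q,h_{sd},h_{rd})$.

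The remaining term is $I(U_1,U_2;S)=\tfrac12\log\frac{\det\Sigma_{U_1U_2}}{\det\Sigma_{U_1U_2|S}}$; conditioning on $S$ removes the $\alpha_i S$ offsets, leaving $(cX_2+\hat{X},X_2)$ with covariance determinant $PP_2$, hence $I(U_1,U_2;S)=C_1\!\big(\tfrac{|c\alpha_2-\alpha_1|^2P_2+\alpha_2^2P}{PP_2}Q\big)$. Substituting the four evaluations into $\tilde{R}=[I(U_1;Y_2|S,U_2)-I(U_1,U_2;Z)]^+$ and $\hat{R}=[I(U_1,U_2;Y)-I(U_1,U_2;S)-I(U_1,U_2;Z)]^+$ produces $\tilde{R}^G$ and $\hat{R}^G$, and restricting the supremum in \eqref{eq:achrate} to this parametric sub-family, optimized over $(c,\alpha_1,\alpha_2,P)$ subject to $|c|^2P_2+P\le P_1$, gives the claimed rate $R_s^G=\max\min\{\tilde{R}^G,\hat{R}^G\}$. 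I expect the only real obstacle to be the bookkeeping in the $3\times3$ covariance determinant for $I(U_1,U_2;Z)$ and $I(U_1,U_2;Y)$: identifying it with the denominator of $C_2$ is where the content lies, and the Cauchy--Binet route is what keeps this manageable, while the rest is routine Gaussian entropy algebra.
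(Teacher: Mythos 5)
Your proposal is correct and follows essentially the same route as the paper: Appendix B simply specializes Theorem \ref{th1} to the stated Gaussian inputs and auxiliaries, records the four mutual information values $I(U_1;Y_2|S,U_2)$, $I(U_1,U_2;S)$, $I(U_1,U_2;Y)$, $I(U_1,U_2;Z)$ (which match your covariance-determinant evaluations exactly), and substitutes them into \eqref{eq:achrate}. Your Cauchy--Binet bookkeeping just makes explicit what the paper dismisses as ``straightforward calculations.''
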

\begin{proof}
	Please refer to Appendix \ref{app:proof2} where the proof is provided.
\end{proof}

\section{Numerical Examples}\label{sec:numex}
In this section, we discuss several numerical examples for the Gaussian REC-SI. The network topology is selected as suggested in \cite{ref:elgamalREC} and shown in Fig. \ref{fig:gaussmodel}. In this network, the transmitter, relay, receiver, and eavesdropper are placed at $(0,0)$, $(x,0)$, $(1,0)$, and $(1,0)$, respectively. In the following, we consider two modes for this channel, real and fading channel, as presented in \cite{ref:elgamalREC}. 

\begin{figure}[t!]
	\centering
	\includegraphics[width=0.6\linewidth]{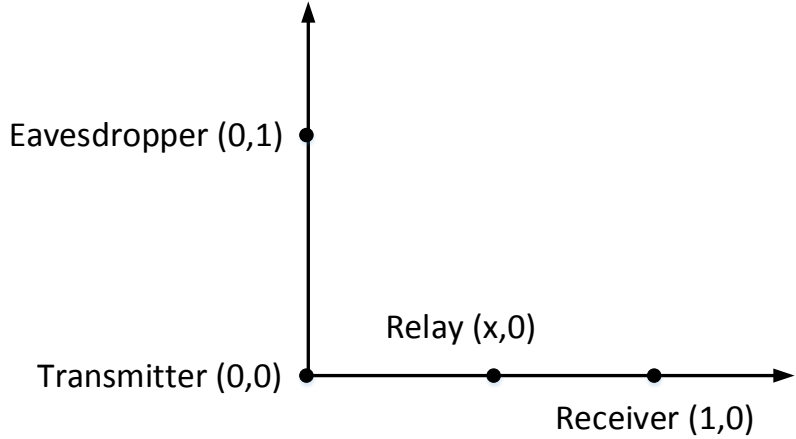}
	\caption{The network topology of Gaussian REC-SI.}
	\label{fig:gaussmodel}
\end{figure}

\subsection{Real Channel}\label{subsec:realch}
In this model, the channel is real; hence the channel coefficient is $h_{jk}=d_{jk}^{-\gamma}$ where $d_{jk}$ is the distance between node $j$ and $k$ and $\gamma>1$ is the channel attenuation coefficient. We give the achievable perfect secrecy rate of this channel under several scenarios. In all numerical examples of this subsection, we set $P_1=1$ and $P_2=8$, and $\gamma=3$.

In the first scenario, the power of state information, $Q$, is changed and the relay is moved along the $x$-axis.  Fig. \ref{fig:changeQ} illustrates the achievable perfect secrecy rate versus the position of the relay, $x$, with and without state information. The dashed curve belongs to the state independent REC. It can be seen that how the state information could help the secure communication between the transmitter and receiver. Moreover, it is not possible to get positive secrecy rate for $x>1$ in REC without state information while as seen in Fig. \ref{fig:changeQ}, we could get positive secrecy rate for this case. Finally, when $x=0$, we could not get positive secrecy, meaning using multi-antenna at the transmitter is ineffective.

\begin{figure}[t!]
	\centering
	\includegraphics[width=0.7\linewidth]{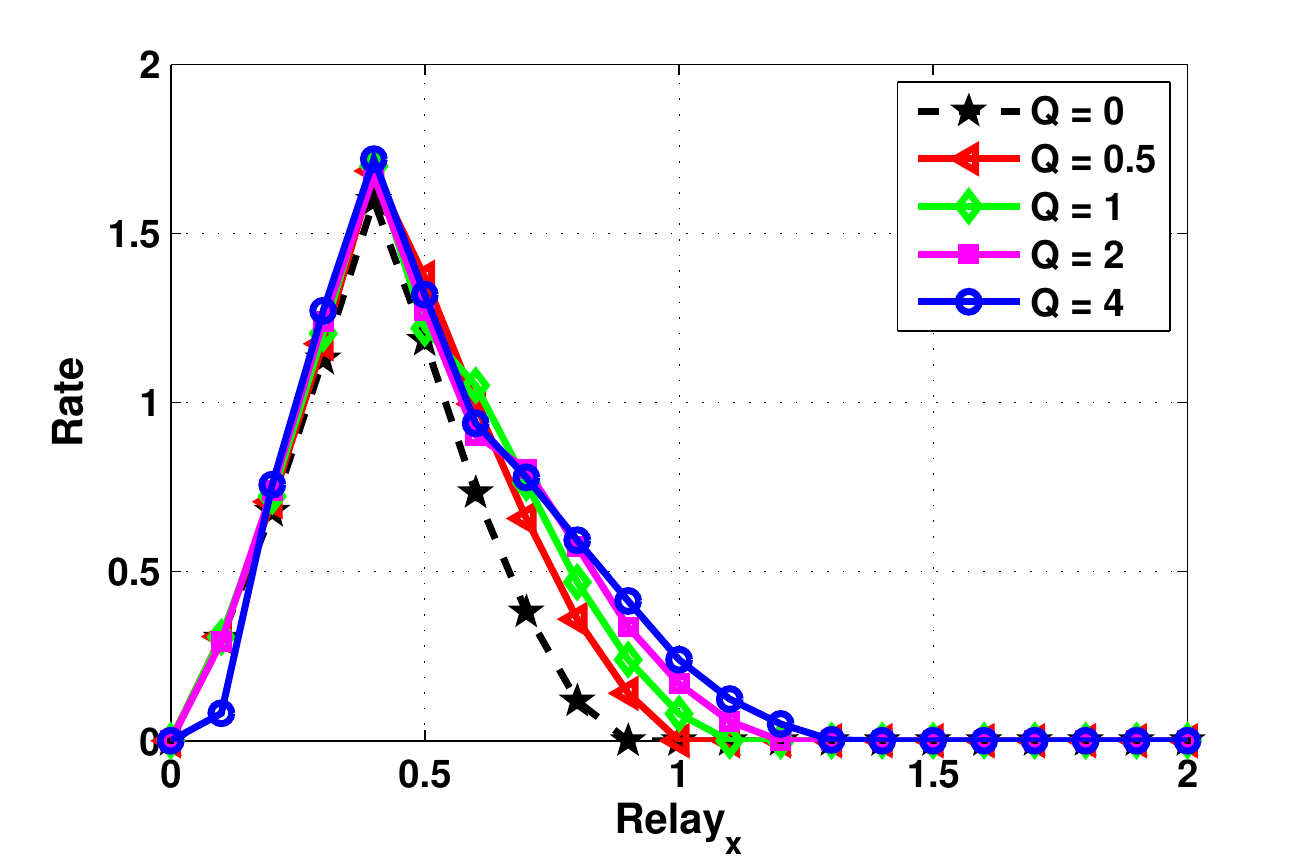}
	\caption{The achievable perfect secrecy rate for the Gaussian REC-SI with and without state information.}
	\label{fig:changeQ}
\end{figure}

In the second scenario, we set $Q=0.5$ and the eavesdropper is moved between the transmitter and point $(0,1)$ along the $y$-axis. The results of this scenario are depicted in Fig. \ref{fig:changeY}. In our network topology, not necessarily with the eavesdropper closer to the transmitter, the secrecy rate will not decrease. Even putting the eavesdropper nearby the transmitter, because it does not know about state information, we can get positive secrecy rate.

\begin{figure}[t!]
	\centering
	\includegraphics[width=0.7\linewidth]{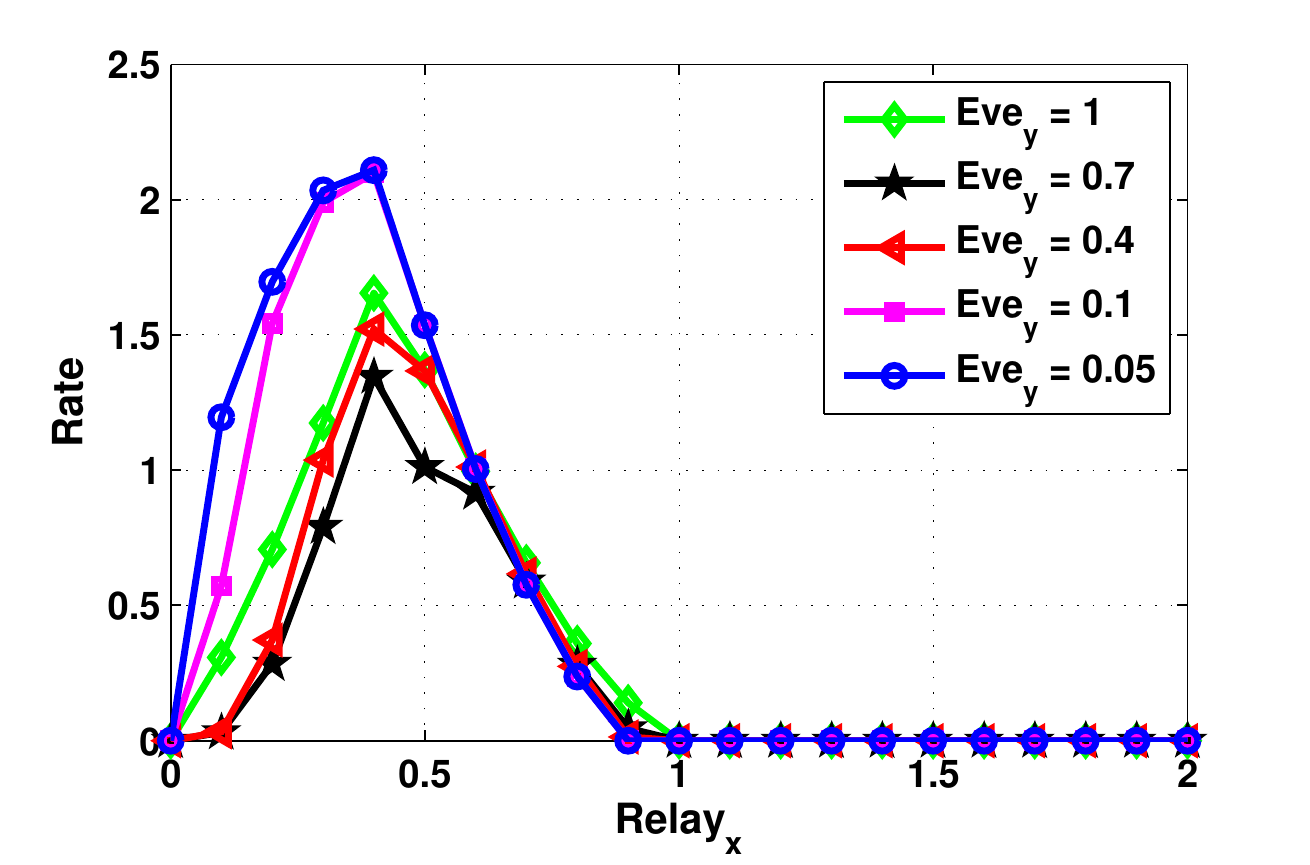}
	\caption{The achievable perfect secrecy rate for the Gaussian REC-SI when the eavesdropper is moved between the transmitter and $(0,1)$. $(Q=0.5)$}
	\label{fig:changeY}
\end{figure}

In the third scenario, the eavesdropper is nearer to the transmitter than the legitimate receiver. It is placed at the point $(0,0.4)$ and $Q$ is changed. Fig. \ref{fig:Y4} shows the achievable perfect secrecy of this scenario with and without state information. We can see how the state information could help these secure communications. Furthermore, we cannot get positive secrecy rate without state.

\begin{figure}[t!]
	\centering
	\includegraphics[width=0.7\linewidth]{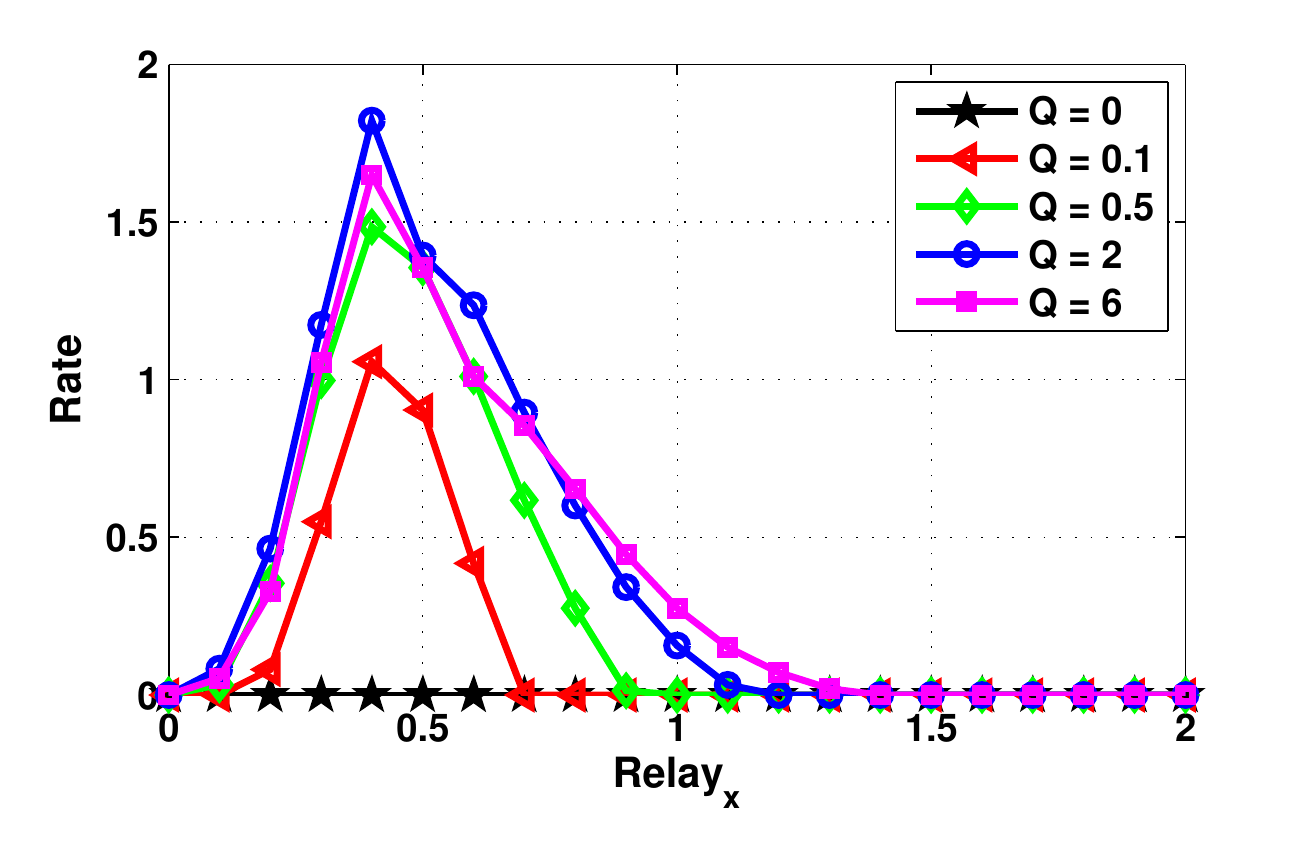}
	\caption{The achievable perfect secrecy rate for the Gaussian REC-SI with and without state information when the eavesdropper is placed at point $(0,0.4)$.}
	\label{fig:Y4}
\end{figure}

Finally, in the fourth scenario, the relay is moved on the straight line between the eavesdropper and receiver. The achievable secrecy rate is shown in Fig. \ref{fig:1-x}. Again, we cannot establish secure communication without the state information. But if we increase the power of the transmitter to $P_1=4$, we can also get positive secrecy rate without the state. This results of this scenario with $P_1=4$ and $P_2=8$ are shown in Fig. \ref{fig:p14p28}.

\begin{figure}[t!]
	\centering
	\includegraphics[width=0.7\linewidth]{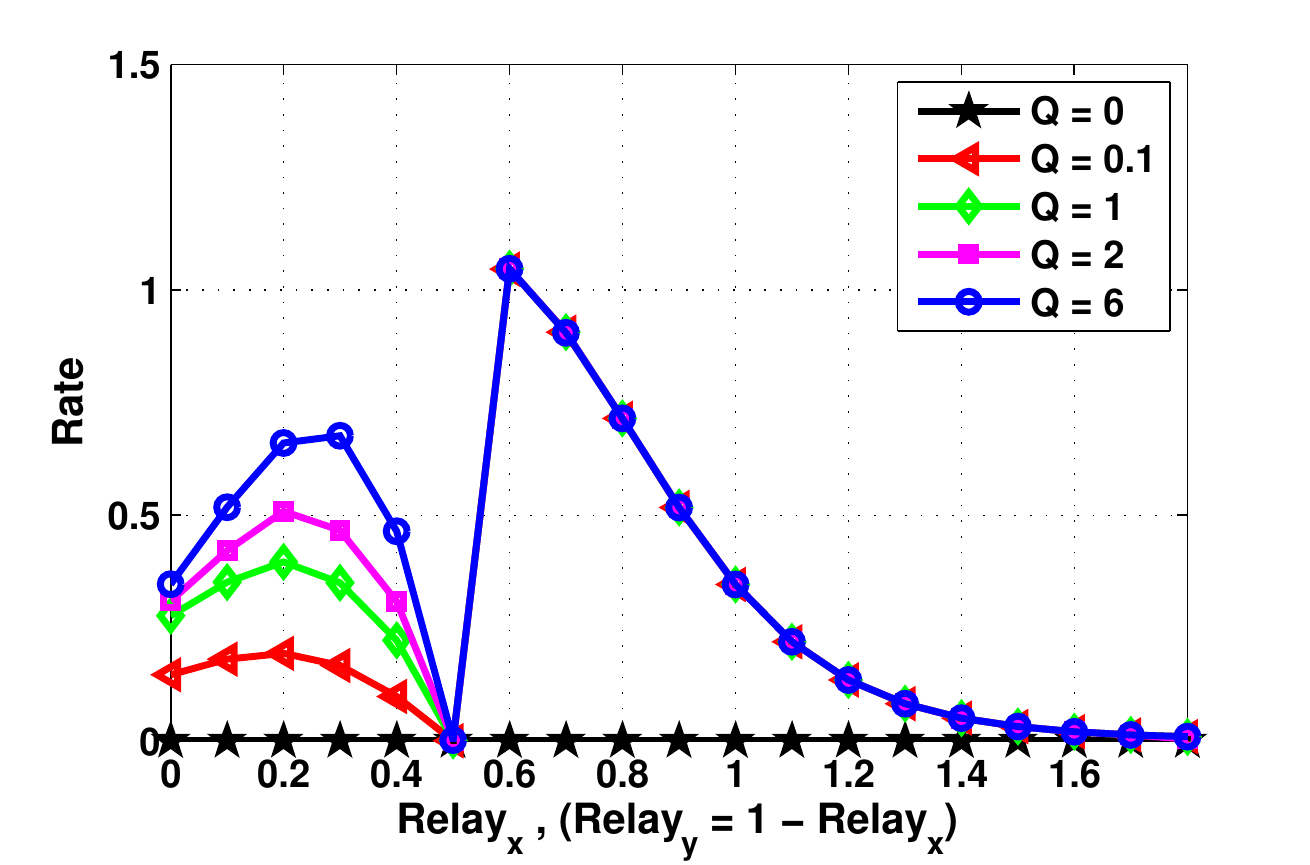}
	\caption{The achievable perfect secrecy rate for the Gaussian REC-SI when the relay is moved on the straight line between eavesdropper and receiver.}
	\label{fig:1-x}
\end{figure}

\begin{figure}[t!]
	\centering
	\includegraphics[width=0.7\linewidth]{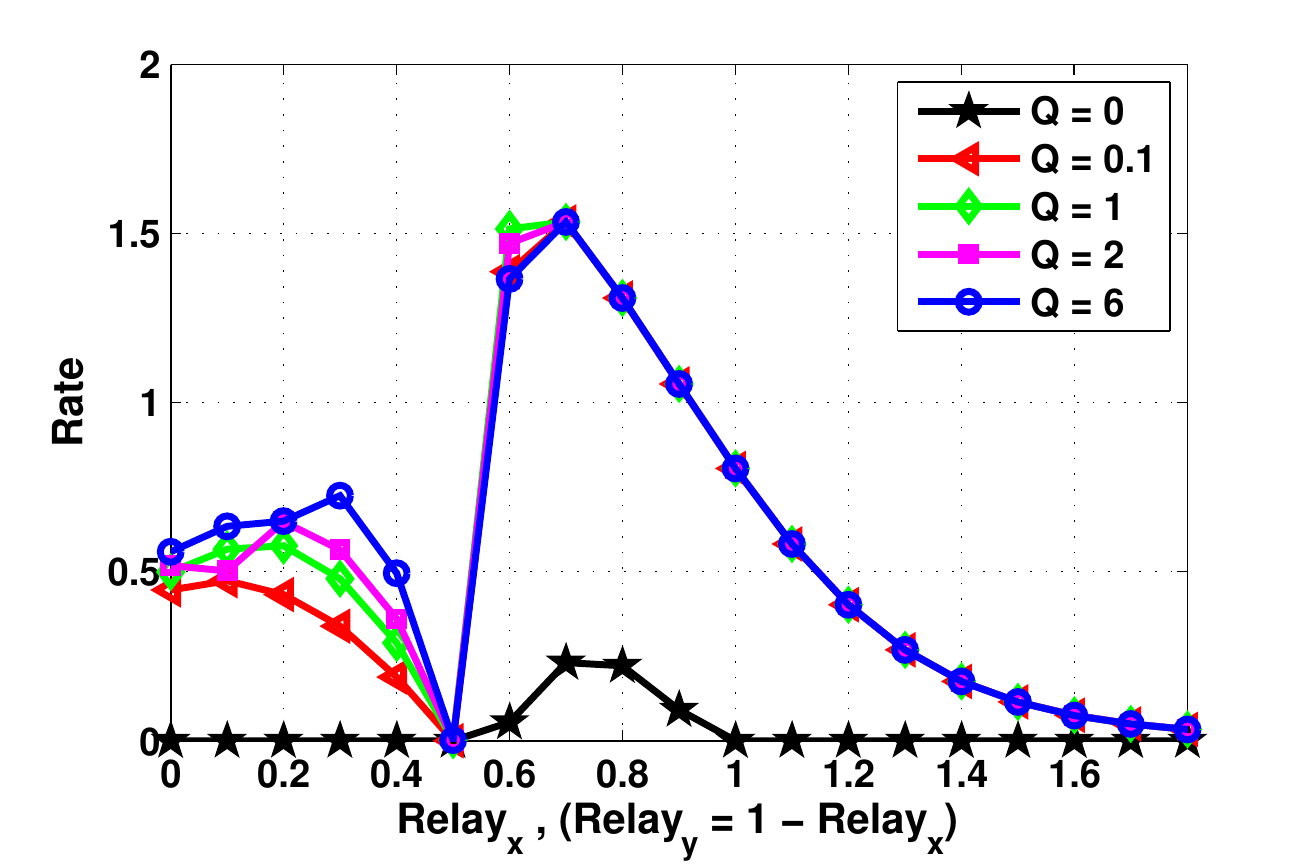}
	\caption{The achievable perfect secrecy rate for the Gaussian REC-SI when the relay is moved on the straight line between eavesdropper and receiver. ($P_1=4$ and $P_2=8$)}
	\label{fig:p14p28}
\end{figure}

\subsection{Fading Channel}\label{subsec:fadech}
In this model we consider a random phase for each channel coefficients, i.e., $h_{jk}=d_{jk}^{-\gamma}e^{j\theta_{j,k}}$. In this model, the transmitter knows the phases $\theta_{12}$, $\theta_{13}$, $\theta_{23}$ but does not know the $\theta_{14}$ and $\theta_{24}$. We choose the fading phases uniformly from $[0,2\pi)$. Moreover, $\theta_{13}$ and $\theta_{13}$ are independent of other fading phases. Here, we only consider the first scenario presented in Subsection \ref{subsec:realch} to see how the state information also helps secure communication in Gaussian fading channel. In this example, we also set $P1=1$, $P_2=8$, and $\gamma=3$. The results are shown in Fig. \ref{fig:fadech}.

\begin{figure}[t!]
	\centering
	\includegraphics[width=0.7\linewidth]{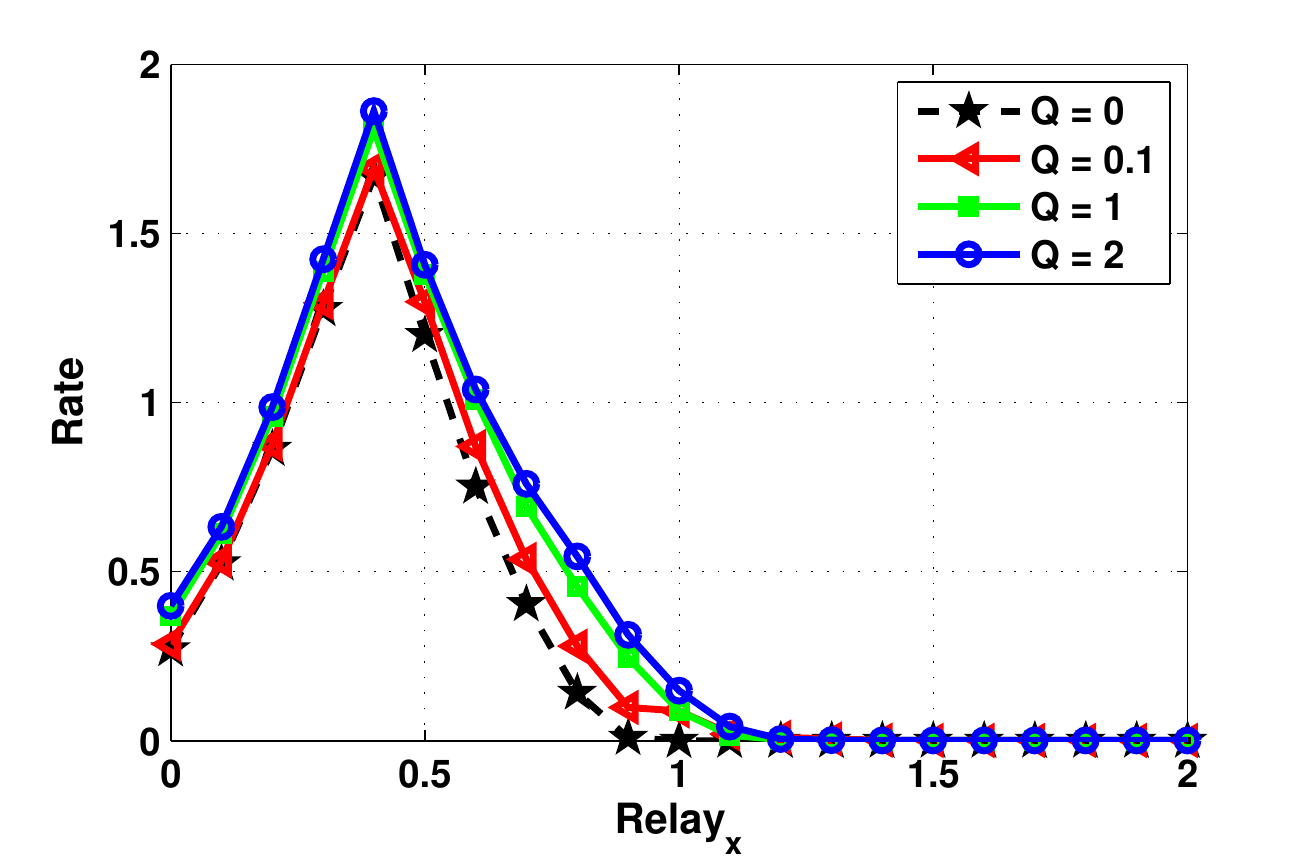}
	\caption{The achievable perfect secrecy rate for the fading Gaussian REC-SI.}
	\label{fig:fadech}
\end{figure}

We can see that state is also helpful in this example. Furthermore, we can see that using multi-antenna in the transmitter is effective in fading channel.

\section{Conclusion}\label{sec:con}
In this paper, we give an achievable secrecy rate for the memoryless REC-SI. The presented model was a generalization of some existing models, where the transmitter wishes to transmit a confidential message to the legitimate receiver keeping secret from an eavesdropper with the help of a relay as an active cooperating partner. Moreover, a random state is non-causally available only to the transmitter and relay. The relay helps this secret communication by relaying the message using DF scheme. We also extended our model to the AWGN channel with an additive Gaussian random state information. An achievable secrecy rate was also derived for the Gaussian REC-SI. We provided some numerical examples, where it could be seen how the state information helps to achieve secure communication.

\appendices
\section{Proof of Theorem \ref{th1}}\label{app:proof1}
\begin{proof}
As stated before, the proof is based on random coding scheme and combination of Csiszar {\it{et al.}} \cite{ref:csiszarKorner} and sliding-window decoding strategy \cite{ref:kingdiissertation,ref:carleial,ref:bookElgamal}. Let define the following parameters for simplicity:

\begin{definition}\label{defrates}
	\begin{align}
		\begin{split}\label{eq:r1}
			R_1 \buildrel \Delta \over = I(U_1,U_2;Z)+\epsilon, \\
		\end{split}\\
		\begin{split}\label{eq:r2}
			R_2 \buildrel \Delta \over = I(U_2;S)+\epsilon, \\
		\end{split}\\
		\begin{split}\label{eq:r3}
			R_3 \buildrel \Delta \over = I(U_1;S|U_2)+\epsilon,\\
		\end{split}\\\
		\begin{split}\label{eq:rate}
			R \buildrel \Delta \over = \min\biggl\{{   I(U_1;Y_2|S,U_2),  I(U_1,U_2;Y)-I(U_1,U_2;S)     }\biggr\} -  I(U_1,U_2;Z)  - 6\epsilon.\\
		\end{split}
	\end{align}
\end{definition}

Consider $B$ transmission blocks in which $n$ transmissions occur. A sequence of $(B-1)$ messages $M_b$, $b\in[1:B-1]$, each selected independently and uniformly over $[1:2^{nR}]$ is sent over these $B$ blocks. Hence the average rate over these blocks is $R(B-1)/B$ which can be made arbitrary close to $R$ as $n \rightarrow \infty$.

{\it{Codebook generation:}}  Fix a p.m.f. as (\ref{eq:pmf}) that attains the achievable perfect secrecy rate in (\ref{eq:achrate}). For each block $b\in [1:B]$, generate $2^{n(R+R_1+R_2)}$ i.i.d. $u^n_2$ sequences, each of length $n$ according to probability $\prod\limits_{i = 1}^n {p({u_{2i}})}$. Index them as $u^n_2(m_{b-1},l_{b-1},w'_{b-1})$ where $m_{b-1}\in[1:2^{nR}]$, $l_{b-1}\in[1:2^{nR_1}]$, $w'_{b-1}\in[1:2^{nR_2}]$. Moreover, for each $u^n_2$, generate $2^{n(R+R_1+R_3)}$ conditionally i.i.d. $u^n_1$ sequences with $\prod\limits_{i = 1}^n {p({u_{1i}}|{u_{2i}})}$. Also index them as $u^n_1(m_{b-1},l_{b-1},w'_{b-1},m_b,l_{b},w_{b})$ where $m_{b}\in[1:2^{nR}]$, $l_{b}\in[1:2^{nR_1}]$, $w_{b}\in[1:2^{nR_3}]$.

{\it{Encoding:}} Let $m_b\in[1:2^{nR}]$ be the message to be sent in block $b$ with state information $s^n$ observed. At first, the stochastic encoder at the transmitter looks in bin $m_{b-1}$ and subbin $l_{b-1}$ for the smallest index $w'_{b-1}$ such that $\left({u^n_2(m_{b-1},l_{b-1},w'_{b-1}),s^n}\right)\in T^{(n)}_{\epsilon}$. Then, it uniformly randomly chooses $l_b$ from bin $m_b$ and looks in subbin $l_{b}$ for a sequence $u^n_1$ such that\\
$\left({u^n_1(m_{b-1},l_{b-1},w'_{b-1},m_{b},l_{b},w_b),u^n_2(m_{b-1},l_{b-1},w'_{b-1}),s^n}\right)\in T^{(n)}_{\epsilon}$ and transmits $\prod\limits_{i = 1}^n {p({x_{1i}}|{u_{2i}},{s_i})}$. If there is no such sequences, then the sender randomly chooses one of them. Assume $m_0=m_B=1$ by convention. In block ($b+1$), the relay knows the estimates $(\tilde m_b,\tilde l_b)$ of the message sent by the transmitter in the previous block; hence, it looks in bin $\tilde m_b$ and subbin $\tilde l_b$ for the smallest index $\tilde w'_b$ such that $\left({u^n_2(\tilde m_b,\tilde l_b,\tilde w'_b),s^n}\right)\in T^{(n)}_{\epsilon}$ and sends $\prod\limits_{i = 1}^n {p({x_{2i}}|{u_{2i}},{s_i})}$.

{\it{Decoding:}} Let $\tilde m_0=1$ by convention. At the end of block $b$, the relay finds the unique $(\tilde m_b,\tilde l_b)$ such that $$\left({u^n_1(m_{b-1},l_{b-1},w'_{b-1},\tilde m_b,\tilde l_b,\tilde w_b),u^n_2(m_{b-1},l_{b-1},w'_{b-1}),s^n,y^n_2(b)}\right)\in T^{(n)}_{\epsilon}$$ for some $\tilde w_b$.
At the end of block ($b+1$), the receiver finds the unique $(\hat m_b,\hat l_b)$ such that $$\left({u^n_1(m_{b-1},l_{b-1},w'_{b-1},\hat m_b,\hat l_b,\hat w_b),u^n_2(m_{b-1},l_{b-1},w'_{b-1}),y^n(b)}\right)\in T^{(n)}_{\epsilon}$$ for some $\hat w_b$ and $\left({u^n_2(\hat m_b,\hat l_b,\hat w'_b),y^n(b+1)}\right)\in T^{(n)}_{\epsilon}$ for some $\hat w'_b$.

{\it{Analysis of the error probability:}} Assume without loss of generality that the transmitted messages are $M_{b-1}=M_{b}=1$ with the corresponding indices $L_{b-1}$ and $L_{b}$ and $\tilde M_b$ and $\hat M_b$ are the relay and receiver message estimates, respectively. If one or more of the following events occur the decoder makes an error.
\begin{align}
\begin{split}\label{eq:eq113}
E^1_1(b)=\left\{{         
\left({ u^n_1(M_{b-1},L_{b-1},w'_{b-1},M_b,L_b,w_b) , u^n_2(M_{b-1},L_{b-1},w'_{b-1}),s^n}\right) \notin T^{(n)}_{\epsilon}{\rm~for~all~}w_b}\right\}, \\
\end{split}\\
\begin{split}\label{eq:e12}
E^1_2(b)=\left\{{         
\left({u^n_2(M_{b-1},L_{b-1},w'_{b-1}),s^n}\right) \notin T^{(n)}_{\epsilon}{\rm~for~all~}w'_{b-1}}\right\}, \\
\end{split}\\
\begin{split}\label{eq:e21}
E^2_1(b+1)=\left\{{         
\left({u^n_2(\tilde M_b,\tilde L_b,\tilde w'_b),s^n}\right) \notin T^{(n)}_{\epsilon}{\rm~for~all~}\tilde w'_b}\right\},\\
\end{split}\\
\begin{split}\label{eq:e221}
E^2_2(b-1)=\left\{{ \tilde M_{b-1} \neq 1}\right\},\\
\end{split}\\
\begin{split}\label{eq:e222}
E^2_2(b)=\left\{{ \tilde M_b \neq 1}\right\},\\
\end{split}\\
\begin{split}\label{eq:e}
E(b-1)=\left\{{ \hat M_{b-1} \neq 1}\right\},\\
\end{split}\\
\begin{split}\label{eq:e31}
E^3_1(b)=\left\{{         
\left({ u^n_1(\hat M_{b-1},\hat L_{b-1},\hat W'_{b-1},\tilde M_b,\tilde L_b,W_b) , u^n_2(\tilde M_{b-1},\tilde L_{b-1},\tilde W'_{b-1}),y^n(b)}\right) \notin T^{(n)}_{\epsilon} }\right.\\ 
\left.{~{\rm or}~\left({u^n_2(\tilde M_b,\tilde L_b,\tilde W'_b),y^n(b+1)}\right) \notin T^{(n)}_{\epsilon}}\right\}, \\
\end{split}\\
\begin{split}\label{eq:e32}
E^3_2(b)=\left\{{
\left({ u^n_1(\hat M_{b-1},\hat L_{b-1},\hat W'_{b-1},m_b,l_b,w_b) , u^n_2(\hat M_{b-1},\hat L_{b-1},\hat W'_{b-1}),y^n(b)}\right) \in T^{(n)}_{\epsilon} }\right.\\ 
~{\rm and}~\left({u^n_2(m_b,l_b,w'_b),y^n(b+1)}\right) \in 
\left.{ T^{(n)}_{\epsilon} {\rm~for~some~} (m_b,l_b)\neq (\tilde M_b,\tilde L_b)}\right\}. \\
\end{split}
\end{align}

The probability of error can be upper bounded as 
\begin{align}
\begin{split}\label{eq:eq114}
P\left({E(b)}\right) = P\left\{{   \hat M_b \neq 1 }\right\} \leq {}& P\left\{{  E^1_1(b) }\right\} + P\left\{{  E^1_2(b) }\right\} + P\left\{{  E^2_1(b+1) }\right\}  + P\left\{{  E^2_2(b-1) }\right\} \\ 
{}&  + P\left\{{  E^2_2(b) }\right\}+ P\left\{{  E(b-1) }\right\}+ P\left\{{  E^3_1(b) \cap  E^{2c}_2(b-1) \cap  E^{2c}_2(b) \cap  E^c(b-1)}\right\} \\
{}&  + P\left\{{  E^3_2(b)\cap  E^{2c}_2(b) \cap  E^c(b-1)}\right\}.
\end{split}
\end{align}
The first, second, and third terms tends to zero as $n\rightarrow \infty$ since (\ref{eq:r2}) and (\ref{eq:r3}), respectively, imply that $R_2 > I(U_2;S)$ and $R_3 > I(U_1;S|U_2)$. Since the definitions of $R$, $R_1$, and $R_3$ in Definition \ref{defrates} imply that $R+R_1+R_3 < I(U_1;S,Y_2|U_2)$, it is not difficult to prove that the forth, fifth, and seventh terms also tends to zero as $n\rightarrow \infty$.
For the eighth term, we have
{\small{
\begin{align}
\begin{split}
P&\left\{{\left({ u^n_1(\hat M_{b-1},\hat L_{b-1},\hat W'_{b-1},m_b,l_b,w_b) , u^n_2(\hat M_{b-1},\hat L_{b-1},\hat W'_{b-1}),y^n(b)}\right) \in T^{(n)}_{\epsilon}~{\rm and}~\left({u^n_2(m_b,l_b,w'_b),y^n(b+1)}\right) \in T^{(n)}_{\epsilon} }\right.\\
&\left.{{\rm~for~some~} (m_b,l_b)\neq (M_b, L_b) {\rm{~and~}}(\tilde M_b,\tilde L_b)= (M_b, L_b)  }\right\}  \\
{}&=\sum\limits_{{m_b} \ne {M_b}} { \sum\limits_{{l_b} \ne {L_b}}{     \left[{     P\left\{      \left({ u^n_1(\hat M_{b-1},\hat L_{b-1},\hat W'_{b-1},m_b,l_b,w_b) , u^n_2(\hat M_{b-1},\hat L_{b-1},\hat W'_{b-1}),y^n(b)}\right) \in T^{(n)}_{\epsilon}~{\rm and}~ (\tilde M_b,\tilde L_b)= (M_b, L_b) \right \}\times  }\right.  }} \\
&\phantom{=\sum\limits_{{m_b} \ne {M_b}}{\sum\limits_{{m_b} \ne {M_b}}{}}} {{  \left.{
P\left\{{\left({u^n_2(m_b,l'_b),y^n(b+1)}\right) \in T^{(n)}_{\epsilon}~| ~(\tilde M_b,\tilde L_b)= (M_b, L_b)  }\right\}   }\right] }}\\
{}&=\sum\limits_{{m_b} \ne {M_b}} { \sum\limits_{{l_b} \ne {L_b}}{     \left[{     P\left\{      \left({ u^n_1(\hat M_{b-1},\hat L_{b-1},\hat W'_{b-1},m_b,l_b,w_b) , u^n_2(\hat M_{b-1},\hat L_{b-1},\hat W'_{b-1}),y^n(b)}\right) \in T^{(n)}_{\epsilon} \right \}\times  }\right.  }} \\
&\phantom{=\sum\limits_{{m_b} \ne {M_b}}{\sum\limits_{{m_b} \ne {M_b}}{}}}  {{  \left.{
P\left\{{\left({u^n_2(m_b,l'_b),y^n(b+1)}\right) \in T^{(n)}_{\epsilon}~| ~(\tilde M_b,\tilde L_b)= (M_b, L_b)  }\right\}   }\right] }}\\
{}&\leq \sum\limits_{{m_b} \ne {M_b}} { \sum\limits_{{l_b} \ne {L_b}}{{2^{ - n\left[ {I\left( {{U_1};{Y}|{U_2}} \right) - 3\epsilon } \right]}}{2^{ - n\left[ {I\left( {{U_2};{Y}} \right) - 3\epsilon } \right]}}} }\leq {2^{n\left[ {R + R_1 + R_2 + R_3} \right]}}{2^{ - n\left[ {I\left( {{U_1},{U_2};{Y}} \right) - 6\epsilon } \right]}};
\end{split}
\end{align}
}}	

hence it is necessary to have 
\begin{equation}\label{eq:condall}
R+R_1+R_2+R_3 \leq I( U_1,U_2;Y) - 6\epsilon.
\end{equation}
According to Definition \ref{defrates}, we can see that (\ref{eq:condall}) holds and therefore, the eighth term also tends to zero as $n\rightarrow \infty$. Finally using induction the sixth term also tends to zero as $n\rightarrow \infty$.

{\it{Analysis of information leakage rate:}} 
We calculate the mutual information between message $M$ and $Z^n$, averaged over the random codebook $\mathcal{C}$.
\begin{align}\label{eq:leakage}
\begin{split}
I(M;Z^n|\mathcal{C})= {}& H(M|\mathcal{C})-H(M|Z^n,\mathcal{C}) = nR - H(M,l|Z^n,\mathcal{C})+H(l|M,Z^n,\mathcal{C}) \\ 
={}& nR - H(M,l|\mathcal{C}) + I(M,l;Z^n|\mathcal{C}) + H(l|M,Z^n,\mathcal{C}) \\
={}& nR - n\bigl({R+R_1}\bigr) + I(M,l,U^n_1,U^n_2;Z^n|\mathcal{C}) + H(l|M,Z^n,\mathcal{C}) \\
\le{}& -nR_1 + I(M,l,U^n_1,U^n_2,\mathcal{C};Z^n) + H(l|M,Z^n,\mathcal{C}) \\ 
\buildrel (a) \over \le{}& -nR_1 + I(U^n_1,U^n_2;Z^n)+ H(l|M,Z^n,\mathcal{C}) \\ 
\buildrel (b) \over \le {}& -nR_1 + nI(U_1,U_2;Z)+ H(l|M,Z^n,\mathcal{C}) + n\epsilon'\\ 
\buildrel (c) \over \le {}& -nR_1 + nI(U_1,U_2;Z)+ n\bigl({R_1-I(U_1,U_2;Z) + \epsilon}\bigr) \\
= {}& n\epsilon
\end{split}
\end{align}
(a) follows from the fact that $\mathcal{C},M,l  \rightarrow U^n_1,U^n_2  \rightarrow Z^n$ forms a Markov chain and (b) is due to the following fact that 
\begin{equation}\label{eq:unleqnu}
I(U^n_1,U^n_2;Z^n) \leq nI(U_1,U_2;Z)+n\epsilon'.
\end{equation}
The proof of (\ref{eq:unleqnu}) is similar to the proof provided in \cite{ref:Liu} (Lemma 3) and in \cite{ref:Xu} (Appendix A). (c) holds since by using Lemma 22.1 in \cite{ref:bookElgamal}, if we have $R_1 > I(U_1,U_2;Z)$, then $H(l|M,Z^n,\mathcal{C}) \le n\bigl({R_1 - I(U_1,U_2;Z) + \epsilon}\bigr)$. From (\ref{eq:r1}), we can see that this condition is satisfied and hence the proof is completed.
\end{proof}

\section{Proof of Theorem \ref{th2}}\label{app:proof2}
\begin{proof}
We extend the achievable perfect secrecy rate derived in Theorem \ref{th1} to the Gaussian case with continuous alphabets \cite{ref:bookCover} for the channel model presented in Theorem \ref{subsec:gaussmodel}. Using an appropriate choice of input distributions, it is sufficient to calculate the mutual information terms in (\ref{eq:achrate}). Straightforward calculations result in the following expressions:
\begin{equation}\label{eq:isr}
I(U_1;Y_2|S,U_2) = \frac{1}{2}\log
\left({1+|h_{sr}|^2P}\right),
\end{equation}
\begin{equation}\label{eq:ius}
I(U_1,U_2;S) = \frac{1}{2}\log
\left({1+\frac{|c\alpha_2-\alpha_1|^2P_2+\alpha_2^2P}{PP_2}Q}\right),
\end{equation}
\begin{equation}\label{eq:iuy}
I(U_1,U_2;Y) = \frac{1}{2}\log
\left({\frac{\left({|c\alpha_2-\alpha_1|^2P_2Q+PP_2+\alpha_2^2PQ}\right)\left({|ch_{sd}+h_{rd}|^2P_2+|h_{sd}|^2P+|h_{sd}+h_{rd}|^2Q+1}\right)}{|h_{sd}(\alpha_1-1)+h_{rd}(\alpha_2-1)|^2PP_2Q+|c\alpha_2-\alpha_1|^2P_2Q+PP_2+\alpha_2^2PQ}}\right),
\end{equation}
and
\begin{equation}\label{eq:iuz}
I(U_1,U_2;Z) = \frac{1}{2}\log
\left({\frac{\left({|c\alpha_2-\alpha_1|^2P_2Q+PP_2+\alpha_2^2PQ}\right)\left({|ch_{se}+h_{re}|^2P_2+|h_{se}|^2P+|h_{se}+h_{re}|^2Q+1}\right)}{|h_{se}(\alpha_1-1)+h_{re}(\alpha_2-1)|^2PP_2Q+|c\alpha_2-\alpha_1|^2P_2Q+PP_2+\alpha_2^2PQ}}\right).
\end{equation}
The proof is completed by substituting (\ref{eq:isr})-(\ref{eq:iuz}) in (\ref{eq:achrate}).
\end{proof}


\end{document}